\newtheorem{definition}{Definition}
\newtheorem{proof}{Proof}
\newtheorem{theorem}{Theorem}
\begin{document}
%
\title{Robust Component-based Network Localization with Noisy Range Measurements\thanks{This work was supported in part by the National Natural Science
	Foundation of China Grant No. 11671400, 61672524;
	the Fundamental Research Funds for the Central University,
	and the Research Funds of Renmin University of China,
	2015030273.
	Corresponding Author, Yongcai Wang, email: ycw@ruc.edu.cn}}

\author{\IEEEauthorblockN{Tianyuan Sun\IEEEauthorrefmark{1}, Yongcai Wang\IEEEauthorrefmark{1}, Deying Li\IEEEauthorrefmark{1}, Wenping Chen\IEEEauthorrefmark{1}, Zhaoquan Gu\IEEEauthorrefmark{2}}
\IEEEauthorblockA{\IEEEauthorrefmark{1}Department of Computer Sciences, Renmin University, Beijing, China, 100084}
\IEEEauthorblockA{Cyberspace Institute of Advanced Technology (CIAT), Guangzhou University, Guangzhou, China, 510006}}


%


\maketitle

\begin{abstract}
Accurate and robust localization is crucial for wireless ad-hoc and sensor networks. Among the localization techniques,  component-based methods advance themselves for conquering network sparseness and anchor sparseness. But component-based methods are sensitive to ranging noises, which may cause a huge accumulated error either in component realization or merging process.  This paper presents three results for robust component-based localization under ranging noises. (1) For a  rigid graph component,  a novel method is proposed to evaluate the graph's possible number of flip ambiguities under noises. In particular, graph's \emph{MInimal sepaRators that are neaRly cOllineaR  (MIRROR) } is presented as the cause of flip ambiguity,  and the number of MIRRORs indicates the possible number of flip ambiguities under noise. (2) Then the sensitivity of a graph's local deforming regarding ranging noises is investigated by perturbation analysis. A novel Ranging Sensitivity Matrix (RSM) is proposed to estimate the node location perturbations due to ranging noises.  (3) By evaluating component robustness via the flipping and the local deforming risks, a Robust Component Generation and Realization (RCGR) algorithm is developed, which generates components based on the robustness metrics. RCGR was evaluated by simulations, which showed much better noise resistance and locating accuracy improvements than state-of-the-art of component-based localization algorithms.  

\end{abstract}


%
\IEEEpeerreviewmaketitle

\vspace{0.1cm}
\noindent \textbf{Keywords}-\emph{Component-based localization,  location robustness, ranging noise, graph rigidity, sensor network}

\section{Introduction}
A general scenario is There maybe few anchors to define the global coordinate system, and the distance measurements among the nodes maybe sparse. 
The problem of network localization has drawn great attentions\cite{liu_location_2010}. It is  to estimate the locations of nodes based on the inter-node range measurements, which generally adopts a graph realization model\cite{aspnes_theory_2006}.

Existing results on network localization can be roughly divided into two categories. The first category investigated the \emph{localizability problem} \cite{yang_beyond_2009}\cite{yang_understanding_2010}, i.e., given the settings of anchors and the distance matrix obtained from the network, researchers investigated whether the whole network or individual node can be uniquely localized or not. 

The second category focus on algorithms of network localization, which can be further divided into 1) \emph{semi-definite programing (SDP)} based, 2) \emph{trilateration-based}, and 3) \emph{component-based}. SDP-based method  adopts centralized optimization algorithm\cite{biswas_semidefinite_2006}\cite{so_theory_2006}. 
The trilateration-based and component-based method can be fully distributed\cite{moore_robust_2004}. In particular, the trilateration-based  method calculates the locations of some localizable nodes by anchors, and sequentially localizes the other nodes by treating the newly located nodes as anchors\cite{goldenberg_localization_2006}\cite{moore_robust_2004}. Its main limitation is that each localizable node need to be directly connected to three anchors, which is actually not necessary in the optimal case\cite{yang_understanding_2010}\cite{aspnes_theory_2006}. Error accumulation is also a main drawback of trilateration-based  method because of the sequential localization process\cite{yang_quality_2010}. 

Component-based methods, recently proposed in \cite{cucuringu_sensor_2012}\cite{wang_etoc:_2010}\cite{wang_component_2008}\cite{wang_component-based_2011}\cite{_component-based_2015} partition the nodes into rigid components. 
Locations of nodes in each rigid component are calculated by trilateration or biliteration. Multiple realization candidates were preserved when the component is not uniquely localizable.  Then each component  is treated as a basic unit. 
Edges between components  help to merge the units and anchors in different units can collaborate to convert the local coordinate systems of components to a global system. As a result, component-based method can localize the nodes which are not localizable in trilateration, and it reduces error accumulation by the collaboration of distributed components. 

But both the localizability problem and the localization algorithms face challenges when distance measurements are noisy, which is inevitable in real applications.  The essential difficulty is that graph rigidity analysis is based on bars and joints with exact lengths\cite{jackson_notes_2002}, which is the key for guaranteeing localization solution uniqueness. 
But, because of the noises of distance measurements, the true location of a node maybe at a point near the estimated location, or at a flipped position (called by flip ambiguity) far from the estimated location. The solution uniqueness becomes difficult to verify under ranging noises. Furthermore, the noises can seriously worsen the error accumulation in sequential localization process.   

To deal with noises, bounding the worst-case locating error becomes an important problem in existing studies\cite{wang_etoc:_2010}\cite{moore_robust_2004}\cite{xiao2015noise}\cite{xiao2018noise}. 
In trilateration-based method, \cite{moore_robust_2004} proposed robust four-node quadrilateral 
as the smallest possible unit whose flip ambiguity probabilities can be bounded. 
In \cite{yang_quality_2010}, trilateration confidence was proposed to evaluate  the quality of trilateration.  But error accumulation is still serious  in sequential localization, even if we select robust quadrilateral or confidential trilateration in each step. 
\cite{wang_etoc:_2010} investigated how to bound the worst-case location error in component-based localization by robust component merging. They proposed efficient algorithm to achieve error-bounded component mergence for  four special scenarios.  Existing practical indoor localization approaches  also proposed dynamic time warping method \cite{ye_accurate_2018} and radio signal strength based voronoi method \cite{ye_robust_2017} for conducting robust localization to dealing with ranging and radio signal strength noises. 


But, due to error accumulation in sequential localization, how to guarantee the localization error in a component to be small is still challenging. It is highly dependent on the component topology. The local error in a component may include both discontinuous flipping errors and continuous deforming errors. To tackle error accumulation in component merging, methods are needed to evaluate the reliability of the realized component. 
Efficient metrics to evaluate the trustworthy of components and efficient methods to avoid the error accumulation are highly desired. 

This paper presents novel quantitative metrics to evaluate the risk of flip ambiguity and the sensitivity of local deforming of a realized component under noises, and efficient algorithm to avoid error accumulation by utilizing the proposed quantitative metrics.  The main contributions are in three aspects:

1) For a generated rigid component under noises, efficient method was proposed to  
find  possible number of flip ambiguities.  \emph{MInimal sepaRator that are neaRly cOllineaR  (MIRROR) } is proposed as the potential cause of  flip ambiguity. Each MIRROR is modeled by a band for noise tolerance, which indicates the nearly collinear vertex separator of the graph. Efficient algorithms are proposed to detect the MIRRORs. The number of MIRRORs indicates the number of possible flip ambiguities. 

2) The sensitivity of location deforming regarding to ranging noises is investigated by perturbation analysis. A novel Ranging Sensitivity Matrix (RSM) is proposed,  which builds a linear equation between the location perturbation of nodes and the ranging noises.  The condition number of RSM indicates the sensitivity of the graph's local deforming regarding to the  ranging noises.  

3) By integration of the flip ambiguity and the local deforming sensitivity metrics, a Robust Component Generation and Realization (RCGR) algorithm is developed. It realizes components selectively based on trustworthy to reduce error accumulation. 
RCGR  is implemented and verified by simulations. It forms  more robust components than state-of-the-art method, leading to better location accuracy, which is nearly comparable to the centralized  approach. 
 \begin{figure*}[htbp]
 \centering
 \begin{minipage}[t]{0.32\linewidth} 
\begin{center}
 \includegraphics[height=1.6in]{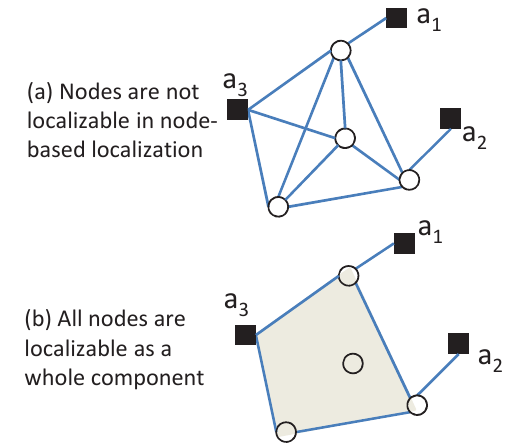}
 \caption{Component-based localization vs. node-based localization}
 \label{nodecomp}
 \end{center}
 \end{minipage}
 \begin{minipage}[t]{0.32\linewidth} 
\begin{center}
 \includegraphics[height=1.5in]{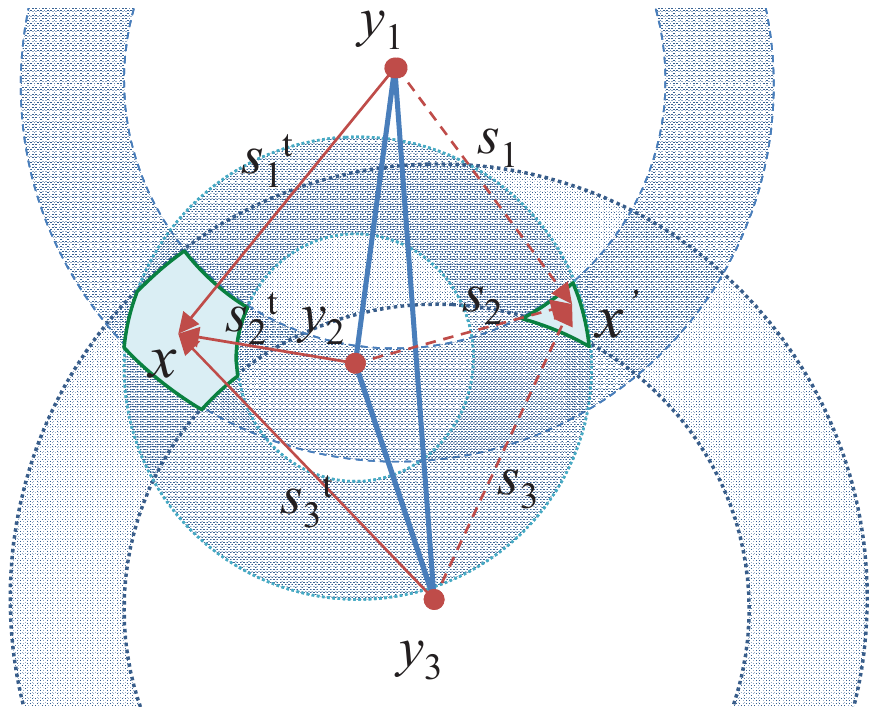}
 \caption{Discontinous ambiguous regions caused by noises}
 \label{discontinuous}
 \end{center}
 \end{minipage}
  \begin{minipage}[t]{0.32\linewidth} 
 \begin{center}
  \includegraphics[height=1.5in]{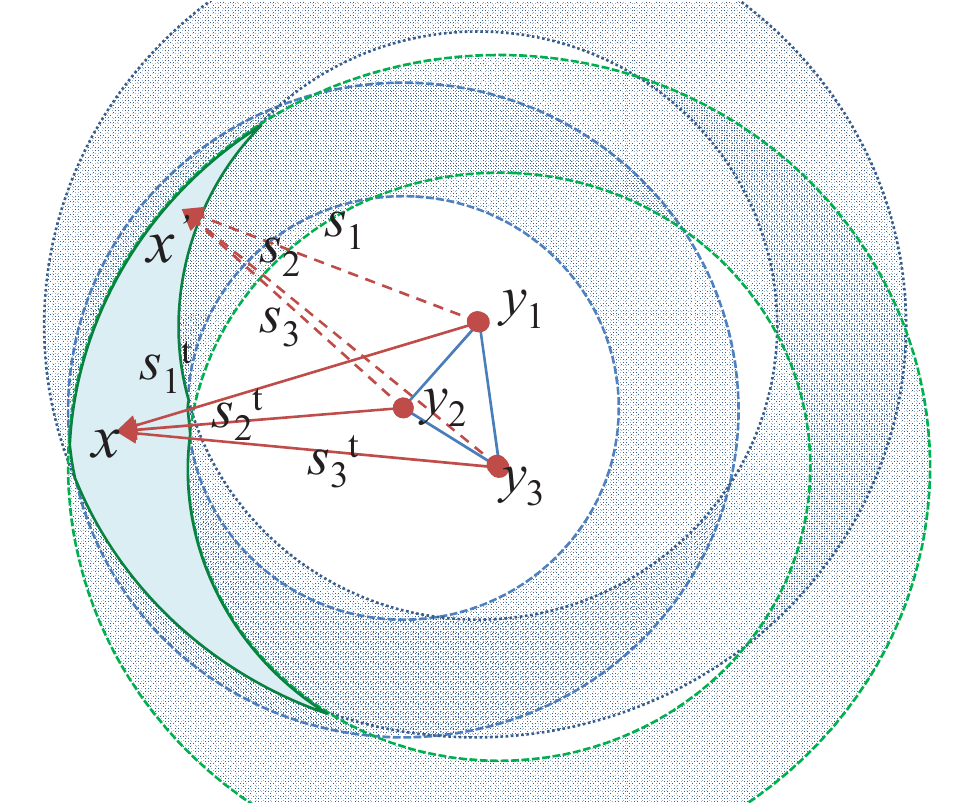}
  \caption{Continuous ambiguous region caused by noises}
  \label{continuous}
  \end{center}
  \end{minipage}
 \vspace{-0.5cm}
 \end{figure*}

The rest sections of this paper are organized as following. The preliminaries are introduced in Section II. Flip ambiguity analysis is presented in Section III. Sensitivity of local deforming is analyzed in Section IV. RCGR  is introduced in Section V. Simulation results are presented in Section VI and conclusion with remarks are presented in Section VII. 
 \section{Preliminary}
\subsection{Terminology}
The network to be localized is described by a  graph $G=(V, E)$, where each vertex $v_i \in V$ denotes a node and each undirected edge $(i,j)\in E$ stands for a distance measurement between two nodes $v_i$ and $v_j$. The measured distance is denoted by  $d_{i,j}$. Distance between $i$ and $j$ can be measured if $d_{i,j} <R$, i.e., the maximum ranging distance.  
No specific distribution is assumed to the ranging noises,  but we can assume the measurement noises have high probability to be smaller than an upper bound $C$. 

A small portion of nodes have calibrated locations, which are anchors, denoted as $\{v_1, v_2, \cdots, v_m\}$. Distances among all anchors are perfectly known without noise. The other $n-m$ nodes are ordinary nodes  to be localized. The estimated location of nodes are denoted by $\{\mathbf{p}_i, i=m+1, \cdots, n\}$. The problem is to estimate the locations of the ordinary nodes, so that the distances among all the vertexes best match the noisy distance measurements. 
\subsection{Network Localization Approaches} 
We briefly review the component-based localization methods, and the approaches for localization robustness.   
\subsubsection{Component-based Localization}
Component-based localization contains three steps: \emph{component generation, component realization, and component merging}\cite{wang_component_2008}\cite{wang_component-based_2011}. 
\begin{enumerate}
\item The component generation is to partition the network into a set of rigid components. Each component is initialized by a triangle (not necessarily to be anchors), which initializes a local coordinate system. Then other nodes are added if there are two edges connecting the node to the component.  
\item Component realization localizes nodes in the component sequentially by trilateration or bilateration. It keeps multiple realization candidates of the component if the realization solution is not unique.  
\item Component merging is to merge the components to one coordinate system, by the anchors in each component and the links between the components. It treats each component as a rigid body merge them by transition and rotation\cite{wang_component-based_2011}. So nodes in all components can be localized. 
As anchors integrate in the mergence process,  which may make two unrealizable components be a realizable one.   
\end{enumerate}

The advantage of  component-based approach is that it has capability to localize unlocalizable nodes in trilateration-based localization. As an example shown in Fig.\ref{nodecomp}. The black nodes are anchors. No nodes can be localized by trilateration because any node has at most two neighboring anchor. But the component can be localized as a whole, since there are three anchors to locate this rigid component. 

  
\subsection{Approaches To Deal With Noises}
Measurement noises are inevitable in real applications. Tackling noises is critical problem in network localization. 
\subsubsection{In Node-based Approach}
Moore et al. \cite{moore_robust_2004} proposed robust four-vertex quadrilateral as the smallest possible subgraph that can avoid flip ambiguity error under noise, so robust quadrilateral is selected for trilateration in each step. Yang et al. \cite{yang_quality_2010} proposed quality of trilateration. The trilaterations with the highest quality should be selected in each step. 
But note that the error accumulations  may still cause large error in the later steps. 
\subsubsection{Dealing Noises in Component-based Localization} \cite{wang_etoc:_2010} studied robust component merging problem under noises. They characterized four patterns in which components can be robustly merged. But an assumption of their approach is that the node locations in each component are trustworthy under noises. This is hard to satisfy in practice because of the error accumulation in sequentially localizing the nodes in the components.  

\subsection{Type of Errors}
In a rigid graph, ranging noises may cause either \emph{discontinuous or continuous localization errors}. 
Since the upper bound of noise is denoted by $C$,  the location of $v_i$ should satisfy the following constraints:
\begin{equation}
\begin{array}{l}
\forall j{\rm{,if  }}{\rm{  }}~~\exists {\rm{ }}(i,j) \in E\\
{\rm{s.t }} - C \le {\left\| {{x_i} - {x_j}} \right\|_2-d_{i,j}} \le C
\end{array}
\label{eqn1}
\end{equation}

That is, $x_i$ is in the intersected region of the  annulus regions centered at $\{x_j, \forall (i,j)\in E\}$, with radius in the range $[d_{i,j}-C, d_{i,j}+C]$. This region is called \emph{ambiguous region  (AR) }  of $x_i$. Fig.\ref{discontinuous} and Fig.\ref{continuous} give examples  to illustrate how the estimation results may be blurred by noise. 

In Fig.\ref{discontinuous}, when the reference points are nearly on a common line, the ambiguous region is separated into two isolated sub-regions. In this case, if $x'$ is the estimated location by the noisy measurement and $x$ is the true location, there is \emph{discontinuous location error} between $x$ and $x'$, which is called flip ambiguity\cite{jackson_notes_2002}. 

In  Fig.\ref{continuous}, when the reference points are very close comparing to the ranging distance, the ambiguous region is long and narrow. The true state $x$ may be deformed to $x'$, which is called \emph{``continuous local deforming error"}.  
For the worst-case localization error of $x$ to be bounded, we need to avoid the flip ambiguity and to avoid large local deforming error. 
\subsection{Error Accumulation}
When the component is generated sequentially, error accumulation is a serious problem, which may cause  later located node  suffer huge localization error. 

Fig.\ref{accum} shows an example of the impact of error accumulation.  Fig.\ref{accum}a) shows true positions of five nodes and their inter-node distance measurements. 
Suppose we have known the locations of $a$, $b$, $c$ as reference points. 
Fig.\ref{accum}b) shows the result when $d$ is trilaterated by $a$, $b$, $c$ in the first step and $e$ is trilaterated by $a$, $b$, $c$, $d$ in the second step. Since $d$ has little triangulation error, the result of $e$ also has small location error. Fig.\ref{accum}c) shows the result when $e$ is calculated firstly and $d$ is calculated secondly. Since the location error of $e$ is large, the location result of $d$ is also large caused by the location error of $e$.

Therefore, to reduce the accumulated error in the component generation, we should select the nodes with small risk of error with high priority.  At the same time, in component mergence process, the components with low risk of flip ambiguity and small local deforming error should be chosen with higher priority. 
In the following two sections, novel methodologies to evaluate the risk of flip ambiguity and the sensitivity of local deforming for a given rigid graph component are presented, which are foundation for robust network localization methods.

\section{Risk of Flip Ambiguity}
Considering a generated rigid graph component under noise, we consider the problem to evaluate its risk of flip ambiguity. 
The locations of some ordinary nodes may flip together, i.e., a subgraph may flip. The risk of flip should be evaluated on the graph  level. 
Therefore, we propose an novel and efficient method, which is to find all the \emph{MInimal sepaRators that are neaRly cOllineaR  (MIRROR) in the graph}. A subgraph may flip over a MIRROR to a wrong position. The total number of MIRRORs  that can be found in the graph indicates the possible number of flip ambiguities of the considered graph.  

A minimal separator $S$ of a graph $G$, is a subset of vertexes whose removal separates $G$ into at least two disconnected components. 
A separator is said to be a minimal separator if the removal of any vertex in $S$ will cause $S$ no longer a separator\cite{kumar_minimal_1998}. Listing all minimal separators is one of the fundamental enumeration problem in graph theory, which has great practical importance in network reliability analysis \cite{kumar_minimal_1998}\cite{provan_complexity_1983}. 

\begin{figure}[t]
\begin{center}
\includegraphics[width=0.4\textwidth]{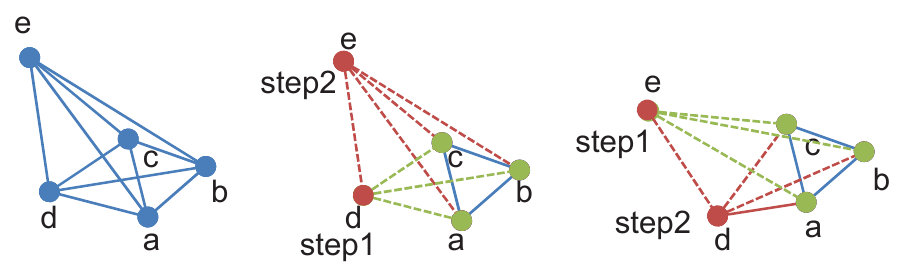}
\caption{ (a) true locations of five nodes and their inter-node distances.  (b) $d$ is generated in the first step and $e$ is in the second step; (c) $e$ is generated in the first step and $d$ is in the second step}
\label{accum}
\end{center}
\end{figure}

\subsection{MIRRORs Under Noise}
First, we look at an example to understand the  problem. In a rigid graph component shown in Fig.\ref{mirror}(a), we check where flip ambiguity can happen. Since noises are presenting, the generated locations may vary in a small neighborhood when noise values change. Therefore, as shown in  Fig.\ref{mirror}(b), we can find three potential MIRRORs, i.e., $\{1,9,5\}$, $\{4,1,8\}$, $\{2,1,6\}$, across which the graph can flip.  Note that $\{8, 6\}$ is also a minimum vertex separator,  but because ${7}$ is nearly collinear with $\{8, 6\}$ and they are on the boarder of the graph, the flip of the whole graph over $\{6,7,8\}$ is a rigid transformation. So $\{8, 6\}$ should not be counted as a MIRROR. 




Therefore,  an efficient algorithm is proposed to find MIRRORs  under noise. 
It can be seen that, there are two conditions to detect a MIRROR: 1) the vertexes on a MIRROR are nearly collinear under noise;  and 2) the removal of them can separate the graph into two disconnected components. Traditional methods that find the minimal vertex separators by graph theory\cite{chandran_linear_2006} did not consider the collinear property nor the noise impacts. In this paper, each noisy, nearly collinear vertex separator is modeled by a \emph{band}, whose width indicates the tolerance to the noises. Efficient algorithm is proposed to find the bands, whose removal can separate the graph, which are the detected MIRRORs. 

\subsection{MIRROR  Finding Algorithm}
\subsubsection{Band Generation and Mergence} 
Note that every two points can determine a line segment. The $n$ nodes with known coordinates can form $n(n-1)/2$ line segments. Since the nodes' locations are noisy, the width of the line segments is increased  to convert the line to a \emph{band}. Each band $B_i$ is centered on the line, determined by the coordinates of the firstly added two nodes,  and has width $2c$.  $c$ indicates the location error range and the center line is denoted by $l(B_i)$.  Then vertexes covered by a band are nearly collinear and will be merged into one band.  After band generation and mergence, each band is charactered by a set of nearly collinear vertexes, i.e., $B_i={v_{i,1}, ...v_{i,|B_i|}}$. 

\subsubsection{Verifying whether a band is a MIRROR} 
For any band, we evaluate whether it is a separator. We treat the band as a cut by removing the vertexes in the band and the edges connected to the removed vertexes. The other nodes that are not in the band may either lie on one side of the band,  or lie on two sides of the band.  The band is a MIRROR if the following three conditions are all satisfied: 
\begin{definition}[MIRROR Detection Conditons]
\begin{enumerate}
\item The other nodes lie on two sides of the band. 
\item There is no edge connecting the two separated groups.
\item At least one separated group does not have an anchor.  
\end{enumerate} 
\end{definition}

\begin{figure}[t]
\begin{center}
\includegraphics[width=0.4\textwidth]{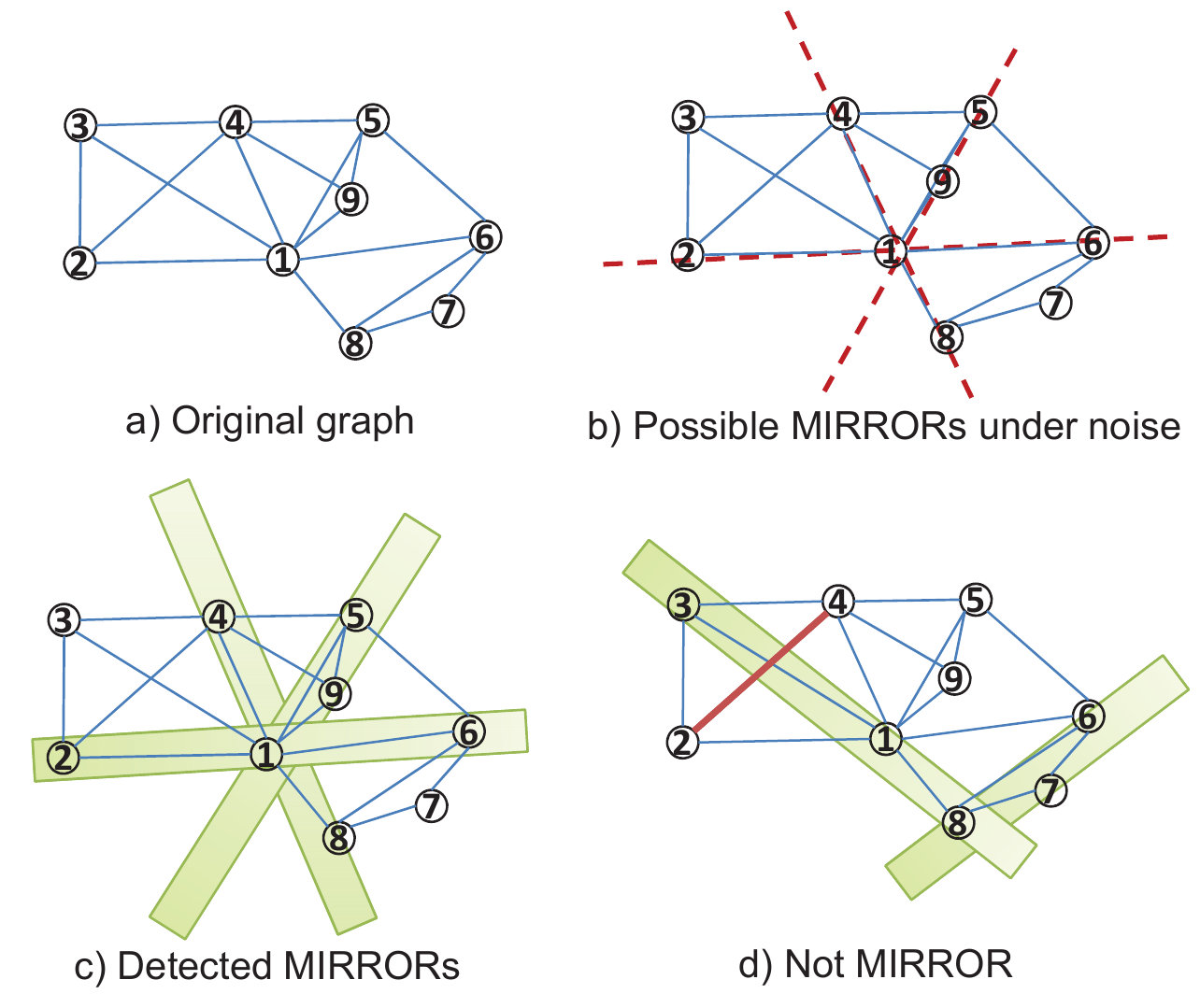}
\caption{ (a) a rigid component generated by noisy measurements.  (b) mirrors formed by nearly collinear vertex cuts. }
\label{mirror}
\end{center}
\end{figure}

\subsubsection{Example} 
An example is given to illustrate these conditions.  The three MIRRORs found in  graph of Fig.\ref{mirror}(a) are illustrated by the three bands in  Fig.\ref{mirror}(c) . It can be seen that by removing the vertexes in each band, the graph is separated into two disconnected groups. There are no edges connecting the two groups and no anchor exists in each group.  Fig.\ref{mirror}(d) shows two examples when the band is not a MIRROR.  For the band characterized by $\{3,1,8\}$, its removal cannot separate the graph,  since $\{2\}$ is connected to the graph by $\{4\}$.   The band characterized by $\{6,7,8\}$ is not a MIRROR, because all the other nodes lie on one side of it. 

\subsection{Algorithm Details and Analysis}
\subsubsection{Band Generation and Mergence Algorithm}
Let $B=\{B_1, \cdots, B_k\}$ denote the set of $k$ detected bands. 
Let $\mathcal{N}(B_i)$ return all the nodes in the band $B_i$, i.e., nodes covered by $B_i$. Let $B_c(i)$ return the set of bands  containing node $i$.  Let  $d(v_i, B_j)$ indicate the distance from a node $v_i$ to a band $B_j$, i.e., the minimum distance from the point $v_i$ to the center line $l(B_j)$. Then for a node $v_j$, the minimum distance from $v_j$ to any band that covers $v_i$ can be expressed as:
\begin{equation}
[d_{j,b_i}, b_i] = \textrm{find\_min\_dis}(v_j, B_c(i))
\end{equation}
where $b_i$ is the returned band covering $v_i$ and it is closest to $v_j$. $d_{j,b_i}$ is the returned minimum distance from $v_j$ to $b_i$. 
 Then the algorithm of band generation and mergence is shown in Algorithm~\ref{algorithm1} . 
\begin{algorithm}
\caption{Band Generation and Mergence}
\begin{algorithmic}[1]
\REQUIRE $G=(V,E)$ and $\{\mathbf{p}_i, i=1,\cdots, n\}$ the node locations 
\ENSURE $B$ the set of detected bands. 
\STATE  Initialize $B=\emptyset$; $k=0$ 
\FOR{$i =1:1:n$}
\FOR{$j = i+1:1:n$}
    \STATE  $B_c(i)$ = find bands covering $i$. 
    \STATE  $[d_{j,b_i}, b_i] $ = $\textrm{find\_min\_dis} (v_j, B_c(i))$ 
	\IF{$B==\emptyset$ \OR $d_{j,b_i} > c$}
	   \STATE   $k=k+1$; $B_k=\{v_i,v_j\}$, $B= B\cup B_k$ 
	\ELSE 
	   \STATE   $B_{b_i} = B_{b_i} \cup v_j$ 
	\ENDIF
\ENDFOR
\ENDFOR	  
\end{algorithmic}
\label{algorithm1}
\end{algorithm}

In line 7, if the distance from $v_j$ to all bands covering $v_i$ is larger than $c$,  $\{v_i,v_j\}$ is generated as a new band. Otherwise $v_j$ is added into the closest band covering $v_i$, i.e., $v_j$ is merged into $B_{b_i}$. We can see the worst case complexity of the algorithm is $O(n^4)$, for the worst case complexity of finding bands covering $j$ is $O(n^2)$ where $n$ is the number of nodes. 

\subsubsection{MIRROR  Verification Algorithm}
Given the generated bands $B$, the algorithm verifies whether each band $B_i$ is a MIRROR  by removing  $\mathcal{N}(B_i)$ and the edges connected to   $\mathcal{N}(B_i)$. It checks whether the remaining graph satisfies the MIRROR detection conditions. 

Since the center line  $l(B_i)$ of band $B_i$ is characterized by the firstly added two nodes. Let's denote the coordinates of these two nodes $\mathbf{u}_{i}$ and $\mathbf{w}_{i}$ respectively.  Considering another node with coordinates $\mathbf{p}_{j}$. The following sign function indicates the relative position between  $\mathbf{p}_{j}$ and  $l(B_i)$:
\begin{equation}
p\!\!=\!\!(\mathbf{w}_{i,x}\!-\!\mathbf{u}_{i,x})(\mathbf{p}_{j,y}\!-\!\mathbf{u}_{i,y})\!-\!(\mathbf{w}_{i,y}\!-\!\mathbf{u}_{i,y})(\mathbf{p}_{j,x}-\mathbf{u}_{i,x})
\label{sign}
\end{equation}
If $p=0$, $\mathbf{p}_j$ is on the line. $\textrm{sign}(p)=+1$ means $\mathbf{p}_{j}$ is on one side, and $\textrm{sign}(p)=-1$, if $\mathbf{p}_j$ is on the other side. By this method, condition 1) can be efficiently checked; and then condition 2) and condition 3) can be  checked easily. The algorithm details are given in Algorithm\ref{algorithm2}.   
 \begin{algorithm}
 \caption{MIRROR  Detection}
 \begin{algorithmic}[1]
 \REQUIRE  $G=(V,E)$, $\{\mathbf{p}_i, i=1,\cdots, n\}$, and $B$ 
 \ENSURE $M$: indicator of whether each $B_i$ is a MIRROR
 \STATE  Initialize $M = I_k$ as a length-$k$ one vector. 
 \FOR{$i =1:1:k$}
 \FOR{$j \in V\setminus B_i$}
      \STATE  calculate the sign of $p_j$ by (\ref{sign}) .
  \ENDFOR 
      \STATE  Let $P+$, $P-$ be the point set of $+$ ($-$) signs; 
 	 \IF{$|P+|==0$  \OR $|P-|==0$}
 	   \STATE   $M_i = 0;$ 
 	\ELSIF {$\forall i \in P+ $  and $j \in P-$, $\exists e_{i,j} \ne 0$}
 	   \STATE   $M_i = 0;$ 
    \ELSIF { both $P+ $  and $P-$ contain anchors}
 	   \STATE   $M_i = 0;$ 
 	  \ENDIF
 \ENDFOR
 \STATE  return $M$ and the zero norm of $M$, i.e., $||M||_0$;
 \end{algorithmic}
 \label{algorithm2}
 \end{algorithm}

 The algorithm initializes $M$ as an all one vector with the same length to $B$, and assigns $M_i$ to zero if any MIRROR detection condition is not satisfied for $B_i$. The sum of remaining ones in $M$, indicated by $||M||_0$ gives the number of detected MIRRORs.  Since the most possible number of bands is $n(n-1)/2$, the worst case complexity of Algorithm\ref{algorithm2} is $O(n^4)$, because for each band we need to check all links between the separated point sets.  
So the number of possible flip ambiguities of a given graph is indicated by $||M||_0$, i.e., the detected number of possible MIRRORs.


\section{Severity of Local Deforming }
In addition to the discontinuous deforming, continuous local deforming may also cause large error when noises are presenting. As shown in  Fig.\ref{continuous}, due to topology illness of the graph, the realized location $x$ may derivate to $x'$ due to small ranging noise.  But evaluating the local deforming severity of a graph by geometric-based method is not efficient.  It involves to solve a large set of quadratic inequality functions as stated in (1) to characterize the feasible regions of each node. Note that the feasible region of a node can be concave, which makes directly characterizing the feasible regions difficult. 
  
\subsection{Overview and Notations}
In this section,   an efficient matrix perturbation-based method is proposed to evaluate the local  deforming sensitivity of nodes regarding to  ranging noises. It is derived to use only the locations of the nodes and the edge lengths to construct a \emph{ranging sensitivity matrix (RSM)}, and the \emph{condition number} \cite{cline_estimate_1979} of the RSM can well predict the sensitivity of local deforming over the ranging noise. Previous results investigated stiffness matrix to evaluate the structure's resistance to noise\cite{zhu_stiffness_2009}, which considers the impacts of node movements. But it is the first time that RSM 　is proposed for evaluating graph robustness regarding to the ranging noises.

Since the edge lengths are noisy, the generated locations of nodes may deviate from the true locations. What we are interested is how serious may the deviation be due to the ranging noise. We present a perturbation-based analysis to evaluate how the variations of the edge lengths within a noise range may change the possible locations of the ordinary nodes. 



Let's consider the generated graph $G=(V,E)$. Suppose there are $m$ anchors in the component and $n-m$ ordinary nodes. Recall that the generated locations of these nodes are denoted by $\{\mathbf{p}_i \in \Re^2, i=1, \cdots, n\}$. The observed edge lengths are denoted by $\{d_{i,j} \in \Re,  (i,j)\in E\}$. Let vector ${{\mathbf{\vec e}_{i,j}}} = \mathbf{p}_i-\mathbf{p}_j$; Then ${d_{i,j}} = ||{\mathbf{\vec e}_{i,j}}|{|_2}$. The perturbation of edge lengths caused by noises are denoted by $\{\Delta d_{i,j} , (i,j)\in E \}$, and the perturbations of nodes' locations caused by edge length varations are denoted by  $\{\Delta \mathbf{p}_{i}, i\in \{1, \cdots, n\}\} $. The new locations of nodes are denoted by $\mathbf{p}'_i=\mathbf{p}_i+\Delta \mathbf{p}_{i}$. The new edge vector is denoted by ${{\mathbf{\vec e}_{i,j}'}} = \mathbf{p}'_i-\mathbf{p}'_j$. 
Note that the locations of the anchors are perfectly known and  $\{\Delta \mathbf{p}_{i}$ is always zero for $ i=\{1, \cdots, m\} $. So we evaluate whether the potential edge length variations will severely change the locations of the ordinary nodes. 

\subsection{Location's Sensitivity to Edge Length Perturbations}

Let's firstly consider one edge $d_{i,j}$ and its two end vertexes $\mathbf{p}_i$ and $\mathbf{p}_j$. If we change the length of $d_{i,j}$ by a sufficiently small amount $\Delta d_{i,j}$, the new locations of $i$, $j$ change to $\mathbf{p}_i'$, $\mathbf{p}_j'$. Let's investigate the relationship among $\Delta d_{i,j}$ and  $\Delta \mathbf{p}_{i}$ and  $\Delta \mathbf{p}_{j}$. 


\begin{figure}[t]
\begin{center}
\includegraphics[height=1in]{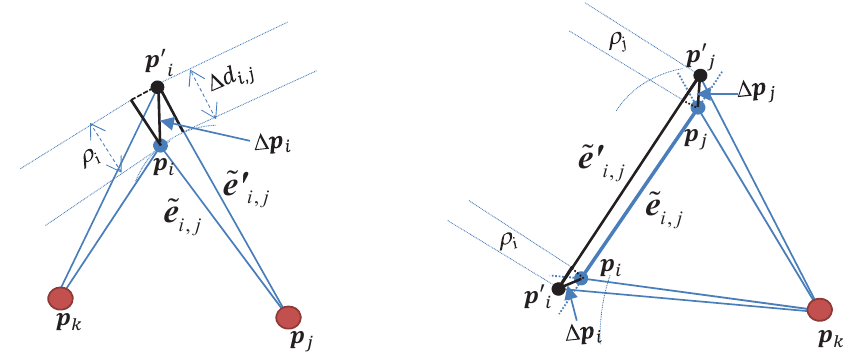}
\caption{ Example of node location perturbation subject to edge length changes. }
\label{perturbation}
\end{center}
\end{figure}

As shown in Fig.\ref{perturbation}(a), suppose $k$, $j $ are anchors, and $i$ is an ordinary node. $\mathbf{p}_i$ is the original location before edge length perturbation and $\mathbf{p}_i'$ is the deviated location of $i$ driven by the edge length variations of $d_{i,j}$ and $d_{i,k}$.  $||\Delta \mathbf{p}_i||_2$  is the derivated distance of $i$. Let ${\rho _{i}}\in \Re$ be the projected length of $||\Delta \mathbf{p}_i||_2$ onto the direction of  $\mathbf{\vec e}_{i,j}$. By linear projection, it can be expressed as:
\begin{equation}
{\rho _{i}} = \frac{{\left\langle {{\mathbf{\vec e}_{i,j}},\Delta \mathbf{p}_i} \right\rangle }}{{{{\left\| {{\mathbf{\vec e}_{i,j}}} \right\|}_2}\left\| {\Delta {\mathbf{p}_i}} \right\|_2}}\left\| {\Delta {\mathbf{p}_i}} \right\|_2 = \frac{{\left\langle {{\mathbf{\vec e}_{i,j}},\Delta \mathbf{p}_i} \right\rangle }}{{{d_{i,j}}}}
\label{pij}
\end{equation}
where ${\left\langle {{\mathbf{\vec e}_{i,j}},\Delta\mathbf{p}_i} \right\rangle }$ is the inner product of  $\mathbf{\vec e}_{i,j}$ and $\Delta \mathbf{p}_i$. 

Fig.\ref{perturbation}(a) illustrates the geometric relationship between $\rho_{i}$ and $\Delta d_{i,j}$. Since the perturbation is small, the angle $\angle {{\mathbf{p}'}_i}{\mathbf{p}_j}{\mathbf{p}_i}$ is small. So: 
 \begin{equation}
 \Delta d_{i,j}  \approx  \rho _{i} =\frac{{\left\langle {{\mathbf{\vec e}_{i,j}},\Delta \mathbf{p}_i} \right\rangle }}{{d_{i,j}}}
 \label{diff1}
 \end{equation}
 This sets up a linear relationship between $\Delta d_{i,j}$ and $\Delta \mathbf{p}_i$. Further, if both $i$, $j$ are ordinary nodes, the geometric relationship is shown in Fig.\ref{perturbation}(b). Since   $\|\Delta \mathbf{p}_i\|_2$ and $\|\Delta \mathbf{p}_j\|_2$ are small, $\mathbf{\vec e}_{i,j}$ and $\mathbf{\vec e}_{i,j}'$ are nearly parallel. So: 
 \begin{equation}
 \Delta d_{i,j} \approx \rho _{i}+\rho _{j} =\frac{{\left\langle {{\mathbf{\vec e}_{i,j}},\Delta \mathbf{p}_i} \right\rangle }}{{d_{i,j}}} +\frac{{\left\langle {{\mathbf{\vec e}_{j,i}},\Delta \mathbf{p}_j} \right\rangle }}{{d_{i,j}}}
 \label{deltad}
 \end{equation}
 Note that $\mathbf{\vec e}_{i,j}=\mathbf{p}_i-\mathbf{p}_j = -\mathbf{\vec e}_{j,i}$, so (\ref{deltad}) can be rewritten as 
 \begin{equation}
\Delta {d_{i,j}} = \left\langle {\frac{{{{\mathbf{\vec e}}_{i,j}}}}{{{d_{i,j}}}},\Delta { \mathbf{p}_i}} \right\rangle  - \left\langle {\frac{{{\mathbf{\vec e}_{i,j}}}}{{{d_{i,j}}}},\Delta { \mathbf{p}_j}} \right\rangle 
\label{diff}
 \end{equation}

The fact behind(\ref{diff1}) (\ref{diff}) is that, under small perturbation of $\Delta d_{i,j}$, $\mathbf{\vec e}_{i,j}$ and $\mathbf{\vec e}_{i,j}'$ are always nearly parallel, which also holds in more general cases. So the projection of $||\Delta { \mathbf{p}_i}||_2$ onto ${\mathbf{\vec e}'_{i,j}}$ can be approximated by the projection onto ${\mathbf{\vec e}_{i,j}}$, which can be calculated by the original coordinates of point $i$ and $j$. 
Therefore, for a graph containing $|E|$ perturbable edges, we can get $|E|$ equations similar to (\ref{diff}). Note that they don't include edges among anchors, since their lengths are constant.

Let $\Delta p_{i,x}$ and  $\Delta p_{i,y}$ be the deviations of  $\Delta \mathbf{p}_{i}$  in the  $x-$ and $y-$ directions respectively, and let $p_{i,x}, p_{i,y}$ be the $x-$ and $y-$ coordinates of $\mathbf{p}_i$. The inner product can then be expended and the $|E|$ equations can be written as the equation shown in Fig.\ref{equation}.
\vspace{-0.5cm}
\begin{figure}[htbp]
\begin{center}
\includegraphics[height=1.6in]{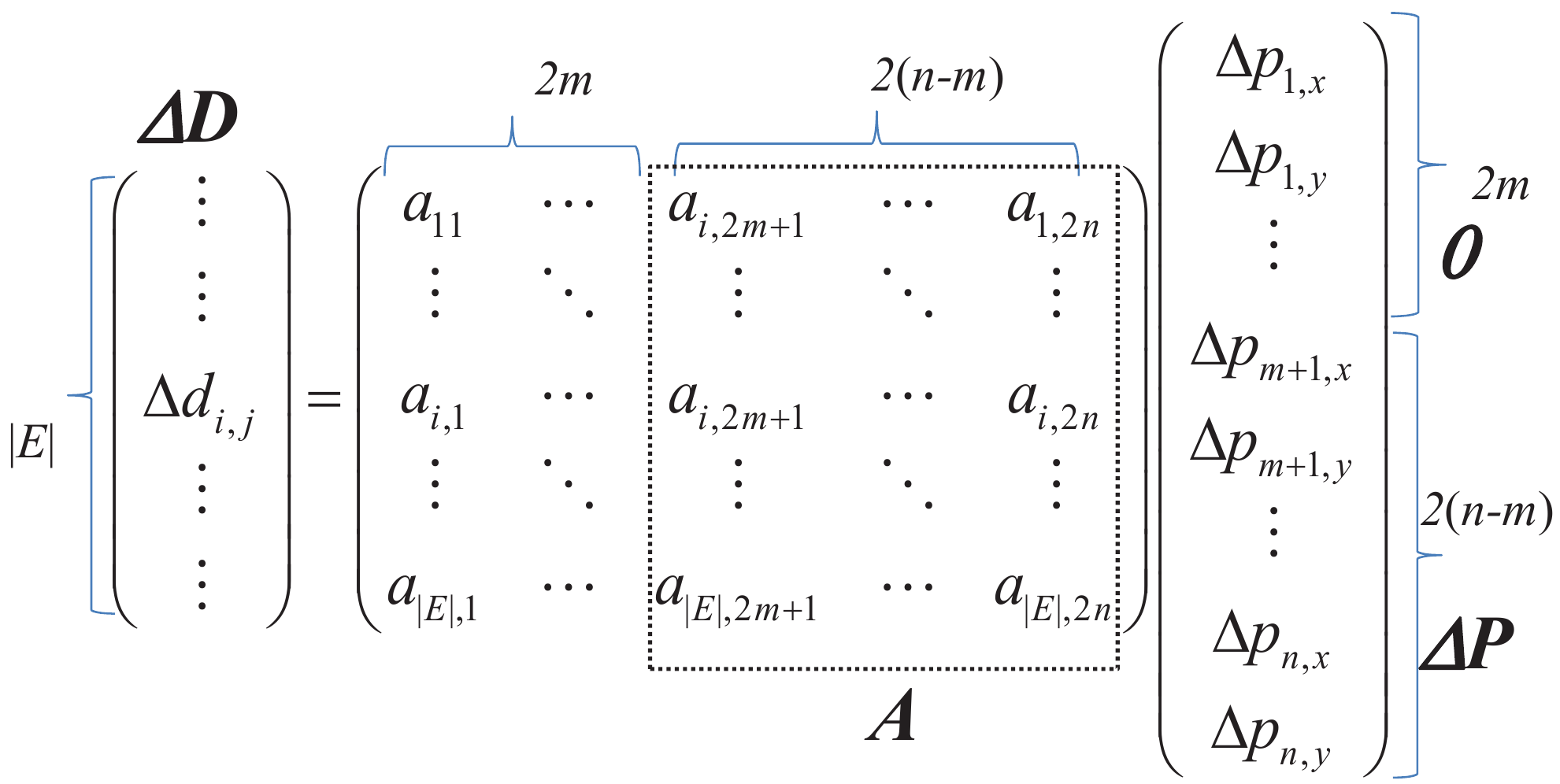}
\caption{Equation  generated by perturbation analysis of $|E|$ edges }
\label{equation}
\end{center}
\end{figure}
\vspace{-0.5cm}
\subsection{Ranging Sensitivity Matrix} 
The coefficient matrix has $|E|$ rows and $2n$ columns. Each row is corresponding to an edge. Every two columns $2i-1$, $2i$ are corresponding to the $x$ and $y$ coordinates of a vertex $i$. Since each edge has only two vertexes, there are at most four non-zero entries in each row. Suppose the $k$th row is corresponding to $\Delta d_{i,j}$, if neither $i$ nor $j$ is an anchor, there are: 
 ${a_{k,2i - 1}}=\frac{{{p_{i,x}} - {p_{j,x}}}}{{{d_{i,j}}}}$; $ {a_{k,2i}}=\frac{{{p_{i,y}} - {p_{j,y}}}}{{{d_{i,j}}}}$; ${a_{k,2j - 1}}= - \frac{{{p_{i,x}} - {p_{j,x}}}}{{{d_{i,j}}}}$; ${a_{k,2j}} =  - \frac{{{p_{i,y}} - {p_{j,y}}}}{{{d_{i,j}}}}$; and ${a_{k,v}} = 0,{\rm{ if ~~ }} v \ne \{ 2i - 1,2i,2j - 1,2j\} $. If $i$ (or $j$) is an anchor,  ${a_{k,2i}}=0$, ${a_{k,2i-1}}=0$, (${a_{k,2j}}=0$, ${a_{k,2j-1}}=0$) correspondingly. So given the generated graph with node locations, this sparse coefficient matrix can be calculated efficiently. 

Note that the location perturbations of the anchors are always zero, i.e., $\{\Delta \mathbf{p}_i = 0$ for $ i=\{1, \cdots, m\}$. Therefore, we can partition the coefficient matrix into two parts: $\mathbf{B}$ and $\mathbf{A}$. The first part $\mathbf{B}$ contains contains $2m$ columns and the second part $\mathbf{A}$ contains $2(n-m)$ columns. Then:
\begin{equation}
 \mathbf{\Delta D} = \left[ {\mathbf{B},\mathbf{A}} \right]\left[ \begin{array}{l}
\mathbf{0}\\
\mathbf{\Delta P}
\end{array} \right] = \mathbf{ A} \cdot \mathbf{\Delta P}
\label{A}
\end{equation}
where $ \mathbf{\Delta D} $ is the vector of edge length perturbation and  $ \mathbf{\Delta P} $ is the location perturbations of the ordinary nodes. 
\begin{definition}[Ranging Sensitivity Matirx: RSM]
The coefficient matrix $\mathbf{A}$ derived from (\ref{A}), which has $|E|$ rows and $2(n-m)$ colunms is defined as the  ranging sensitivity matrix (RSM), which represents the linear mapping from the node location perturbation to the edge length perturbation. 
\end{definition}
\subsection{Sensitivity Indicated by Condition Number of RSM} 
Recall the linear algebra, for a linear equation $Ax=b$,  the condition number  of $A$, denoted by $\kappa (A)$   indicates the maximum ratio of the relative error in $x$ divided by the relative error in $b$ \cite{cline_estimate_1979}. It measures how much the output value of the equation $x$ can be changed for a small change in the input $b$.  So in our problem,  we can evaluate the sensitivity of $ \mathbf{\Delta P} $ over $ \mathbf{\Delta D} $ based on the condition number of  $\mathbf{A}$. 
We first look at a property of $\mathbf{A}$:
\begin{theorem}
If the under investigating graph is  infinitesimal locally rigid, and there are $m\ge 2$ non-collocated anchors, then $\mathbf{A}$ has full rank $2(n-m)$. 
\end{theorem}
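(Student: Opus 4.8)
The plan is to recognize $\mathbf{A}$ as the classical rigidity matrix of the framework with the anchor columns deleted, and then translate the full-rank claim into a statement about infinitesimal flexes. First I would observe that the row for edge $(i,j)$ in $\mathbf{A}$ records exactly the linearized constraint $\langle \mathbf{\vec e}_{i,j}/d_{i,j},\, \Delta\mathbf{p}_i - \Delta\mathbf{p}_j\rangle = \Delta d_{i,j}$ coming from (\ref{diff1})--(\ref{diff}); dividing each row by the positive scalar $d_{i,j}$ is left-multiplication by an invertible diagonal matrix, so it changes neither the rank nor the kernel. Hence $\mathbf{A}$ has the same null space as the submatrix of the standard rigidity matrix $R$ obtained by deleting the $2m$ anchor columns. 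Since $\mathbf{A}$ is $|E|\times 2(n-m)$, the claim ``full rank $2(n-m)$'' is equivalent to ``$\mathbf{A}$ has trivial kernel,'' so by rank--nullity it suffices to show $\ker\mathbf{A}=\{\mathbf{0}\}$.

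Next I would lift a putative kernel vector to the whole framework. Let $\mathbf{\Delta P}$ satisfy $\mathbf{A}\,\mathbf{\Delta P}=\mathbf{0}$, and form the vector $\mathbf{u}\in\Re^{2n}$ that equals $\mathbf{\Delta P}$ on the ordinary nodes and is zero on the anchors. Because the anchor block $\mathbf{B}$ multiplies the zero anchor displacements, the full product satisfies $R\mathbf{u}=\mathbf{B}\cdot\mathbf{0}+\mathbf{A}\,\mathbf{\Delta P}=\mathbf{0}$, so $\mathbf{u}$ is an infinitesimal flex of $G$. By the hypothesis that the framework is infinitesimally (locally) rigid, every infinitesimal flex is trivial, i.e.\ there exist a translation $\mathbf{t}\in\Re^2$ and a scalar $\omega$ with $\mathbf{u}_i=\mathbf{t}+\omega\,\mathbf{J}\mathbf{p}_i$ for all $i$, where $\mathbf{J}=\begin{pmatrix}0&-1\\1&0\end{pmatrix}$ is the $90^\circ$ rotation.

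Then I would use the two non-collocated anchors to kill this trivial motion. For any anchor $a$ we have $\mathbf{u}_a=\mathbf{0}$, hence $\mathbf{t}+\omega\,\mathbf{J}\mathbf{p}_a=\mathbf{0}$. Taking two distinct anchors $a,b$ and subtracting gives $\omega\,\mathbf{J}(\mathbf{p}_a-\mathbf{p}_b)=\mathbf{0}$; since $\mathbf{J}$ is invertible and the anchors are non-collocated ($\mathbf{p}_a\neq\mathbf{p}_b$), this forces $\omega=0$, and then $\mathbf{t}=\mathbf{0}$. Therefore $\mathbf{u}\equiv\mathbf{0}$, so in particular $\mathbf{\Delta P}=\mathbf{0}$, which proves $\ker\mathbf{A}=\{\mathbf{0}\}$ and hence full column rank. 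This step also shows the hypotheses are sharp: a single anchor pins only the two translational degrees of freedom and leaves the one-parameter family of rotations about it, whose restriction to the ordinary nodes is a nonzero element of $\ker\mathbf{A}$, while two collocated anchors behave like one.

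I expect the main obstacle to be the bookkeeping that links the author's geometrically derived matrix to the rigidity matrix so that infinitesimal rigidity may legitimately be invoked: one must confirm that the first-order approximations used to obtain (\ref{diff1})--(\ref{diff}) yield \emph{exactly} the rigidity constraint $\langle\mathbf{p}_i-\mathbf{p}_j,\,\mathbf{u}_i-\mathbf{u}_j\rangle=0$ rather than a weaker linearization, and that deleting the anchor columns (rather than appending pinning rows) is the correct algebraic encoding of ``anchors held fixed.'' Once that identification is made cleanly, the rigidity-plus-two-anchors argument is routine.
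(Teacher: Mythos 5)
Your proposal is correct and rests on the same underlying idea as the paper's own proof: a rank deficiency of $\mathbf{A}$ would mean a nonzero perturbation $\mathbf{\Delta P}$ of the ordinary nodes compatible with zero edge-length perturbation, i.e.\ a nontrivial infinitesimal flex fixing the anchors, which contradicts infinitesimal rigidity together with the $m\ge 2$ pinned, non-collocated anchors. The paper merely asserts this contradiction in two sentences, whereas you carry it out rigorously---identifying $\mathbf{A}$ with the anchor-pinned rigidity matrix, classifying the trivial flexes as $\mathbf{u}_i=\mathbf{t}+\omega\,\mathbf{J}\mathbf{p}_i$, and using the two distinct anchor positions to force $\omega=0$ and $\mathbf{t}=\mathbf{0}$---so your argument is a sound completion of the paper's sketch.
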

\begin{proof}
Suppose $\mathbf{A}$ has rank less than $2(n-m)$,  given a $ \mathbf{\Delta D} $, there will be infinite number of solutions of $ \mathbf{\Delta P} $. This contradicts with the condition that the under investigation graph is infinitesimal locally rigid and have been sticked by the  $m\ge 2$ anchors at the 2D plane (without the freedom of continuous transition and  rotation).  So $\mathbf{A}$ has full rank $2(n-m)$ when $m\ge 2$. 
\end{proof}

In graph generation in $2D$ space, we always need to name at least two anchors to define the local coordinate system. So $m\ge 2$ in practice. Then by SVD decompostion of $\mathbf{A}$, we can  get  $2(n-m)$ number of  non-zero eigenvalues of $\mathbf{A}$, denoted by ${\sigma_1(\mathbf{A}) \ge \sigma_2(\mathbf{A}) \ge \cdots \ge \sigma_{2(n-m)}(\mathbf{A}) }$, which are sorted in descending order. Then the condition number of $\mathbf{A}$ can be calculated as following:
\begin{definition} [Condition Number of $\mathbf{A}$]
$\kappa (\mathbf{A}) = \frac{{{\sigma _1}(\mathbf{A})}}{{{\sigma _{2(n-m)}(\mathbf{A})}}}$ is defined as the condition number of $\mathbf{A}$, which measures  the sensitivity of $ \mathbf{\Delta P} $ over $ \mathbf{\Delta D} $.
\end{definition}

$\kappa (\mathbf{A})$ is an indicator of the location robustness of nodes to the ranging noises.  When $\kappa (\mathbf{A})$ is large, it means  $\mathbf{A} $ is ill conditioned. Small changes in $ \mathbf{\Delta D} $ may cause large derivation of $ \mathbf{\Delta P} $ . When $\kappa (\mathbf{A})$ is small, the equation is well conditioned. Small changes of $ \mathbf{\Delta D} $ will not much deviate  $ \mathbf{\Delta P} $, which means the realized locations of nodes are robust to the ranging noises. 

\begin{figure}[t]
\begin{center}
\includegraphics[height=0.5in]{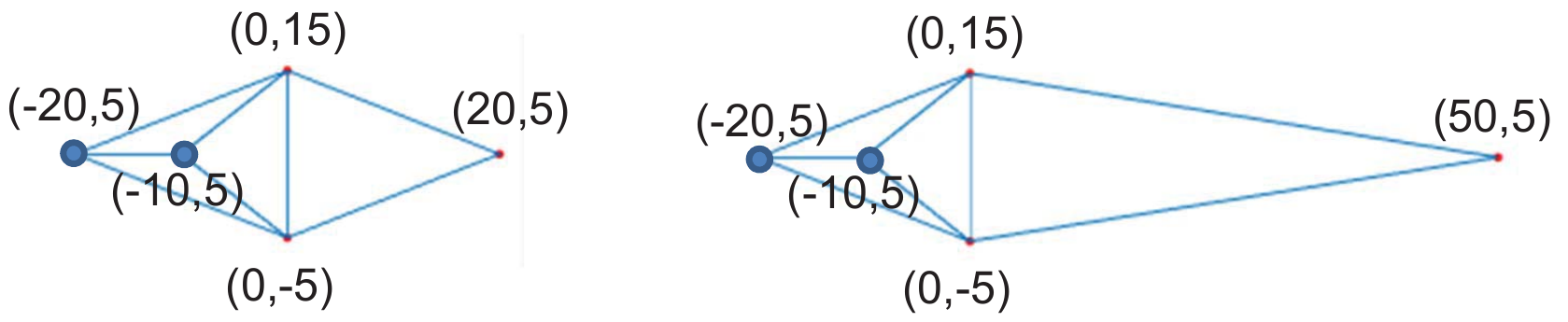}
\caption{ (a) an example topology.  (b) a topology which is more sensitive to ranging error. }
\label{example}
\end{center}
\end{figure}

\subsection{Local Deforming Sensitivity Evaluation Algorithm}
Given a generated graph, the sensitivity of local deforming to ranging noises can therefore be efficiently evaluated. 
Note that the most complex step is to calculate the eigenvalues, which has complexity $O(|E|\left(2(n-m)\right)^2)$. So the worst-case complexity is  $O(n^4)$, because there can be at most $n(n-1)/2$ edges.

 Fig.\ref{example} gives two example graphs to illustrate their condition numbers  for better understanding the graph's sensitivity to the ranging noises. In the graphs, the nodes at $(-20,5), (-10,5)$ are used as anchors. By evaluating $\kappa(\mathbf{A})$, the condition number of the graph in Fig.\ref{example}(a) is 13.68, while the condition number of the graph in Fig.\ref{example}(b) is 24.81.  This coincides with our general knowledge that the right graph is more sensitive to ranging noises. 


\section{Robust Component Localization }
Based on the above analysis of \emph{risk of flip ambiguity} and \emph{sensitivity of local deforming}, efficient scheme for robust component-based localization can be designed. Given the observed distance matrix $D$ among nodes, traditionally, component generation is to find rigid components to partition the graph;  component realization realizes each rigid component.  A rigid component is realized in a local coordinate system and may have finite number of realization candidates. Component mergence is to merge the candidate realizations of components based on the links among the components to finally realize the graph, if the graph is uniquely localizable\cite{aspnes_theory_2006}. The flow of the traditional component-based localization is shown in Fig.\ref{component}(b).   An example of the component generation and realization is shown in Fig.\ref{component}(a). Two components are generated from the distance matrix. It can be seen that every node in one component does not have two links connected to the other component. The second component has two candidate realizations because it may have a flip ambiguity. 
\begin{figure}[t]
\begin{center}
\includegraphics[height=1in]{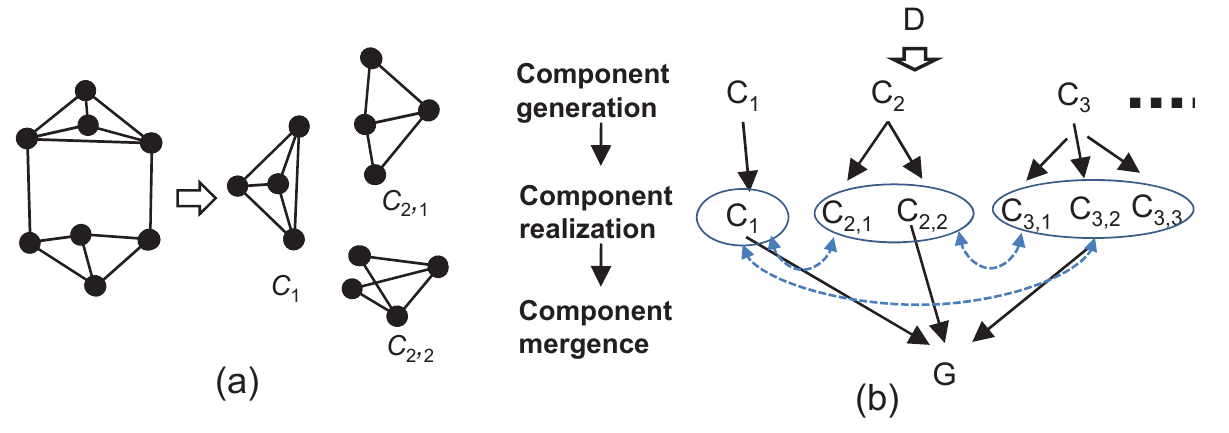}
\caption{ (a) An example of component generation and realization.  (b) Working flow of component-based localization. }
\label{component}
\end{center}
\end{figure}

The robust component mergence against noise was investigated in \cite{wang_etoc:_2010}. Therefore, this paper focuses on how to generate robust components, in which, the candidate locations of nodes are robust against noise. It will be designed based on the flip risk and RSM condition number evaluation. 
\subsection{Robust Component Generation and Realization (RCGR)}
The evaluations of flipping risk and local deforming sensitivity are integrated into the component generation and realization process. 
The overall routine is that: 
\subsubsection{Component Generation} The set of components $C$ is initialized as $\emptyset$. Then from distance matrix $D$, we firstly select three anchors with robust topology to initialize a component $C_1$, and then find nodes that have at least two edges connected to $C_1$ to expand $C_1$. The process repeats to add node to $C_1$ until no other nodes having two links with $C_1$. 
The heuristic method to  select the initial triangle is to find three vertex in $G\setminus C$, such that
\begin{equation}
\{v_i, v_j, v_k\} = \arg\max_{i,j,k} \{d_{i,j} \cdot d_{i,k}\cdot d_{j,k}\}
\label{anchor}
\end{equation}
because a large equilateral triangle is robust to both flip and local deforming errors. 
 \begin{figure*} [t]
   \begin{minipage}[t]{0.247\linewidth} 
     \centering 
     \includegraphics[height=1.5in]{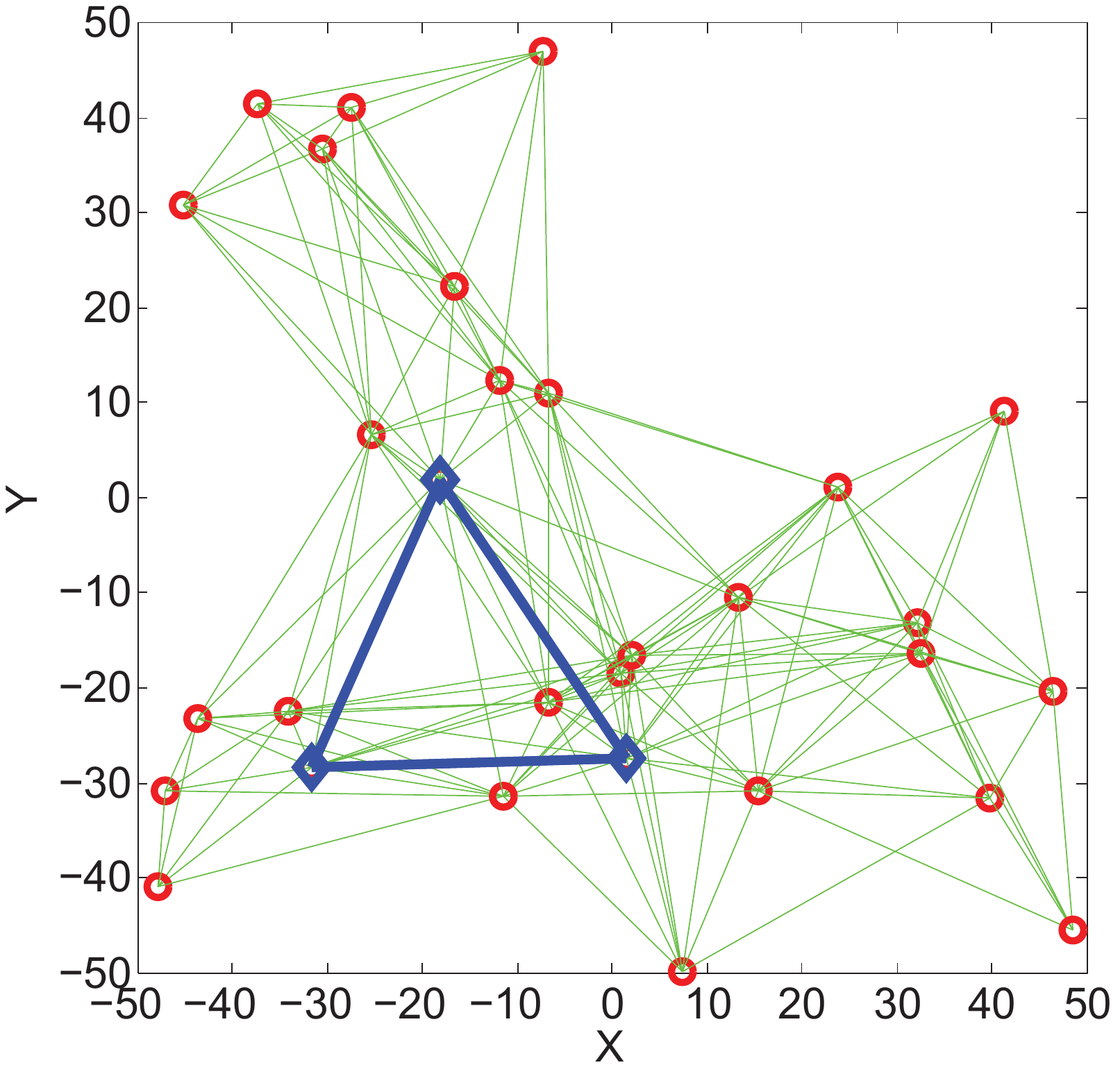} 
     \caption{Network topology  and anchor selected by the heuristic in (\ref{anchor})} 
     \label{eva:1} 
   \end{minipage}
   \begin{minipage}[t]{0.247\linewidth} 
     \centering 
     \includegraphics[height=1.5in]{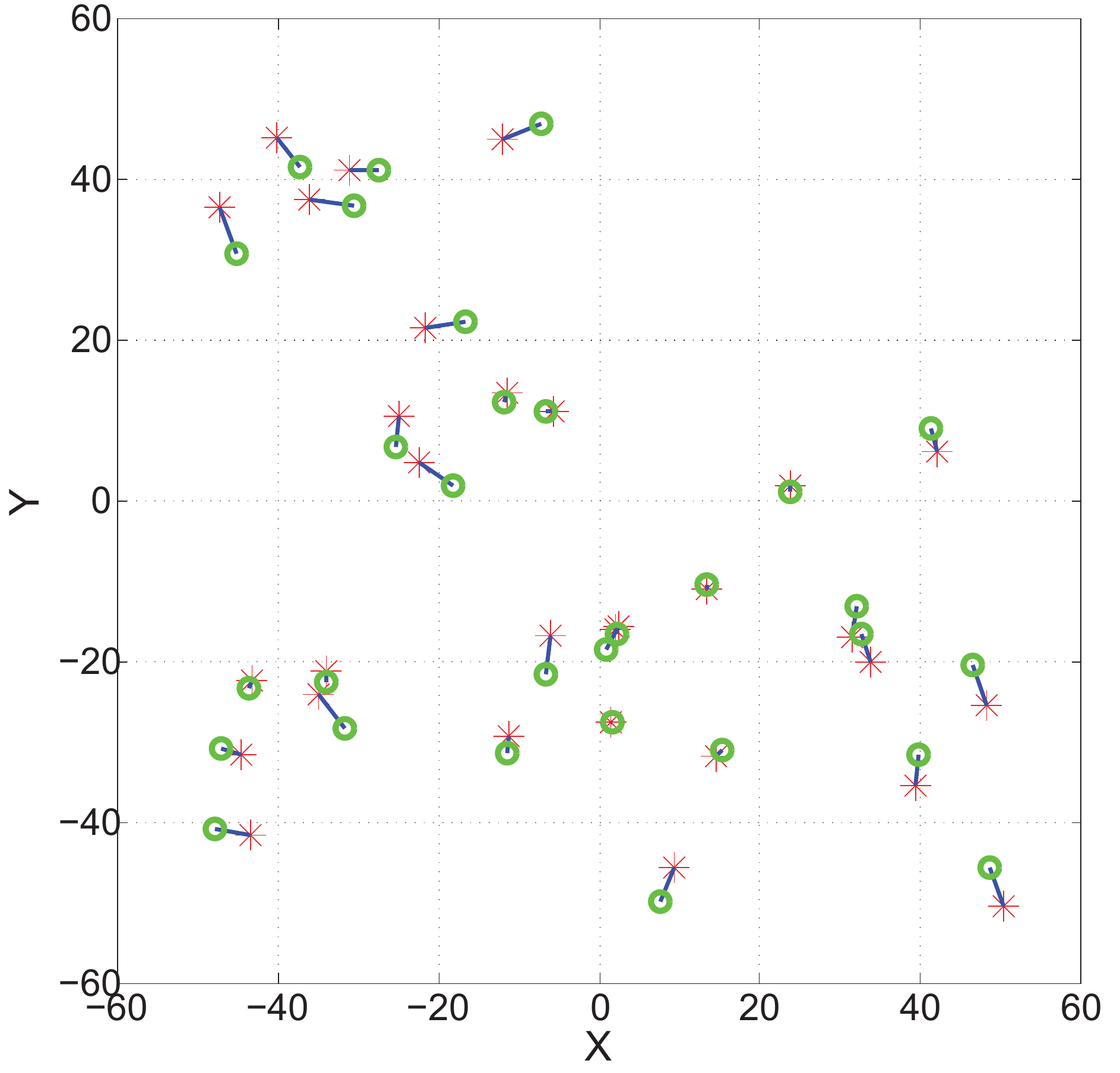} 
     \caption{Results by SDP } 
     \label{eva:2} 
   \end{minipage}%
   \begin{minipage}[t]{0.247\linewidth} 
     \centering 
     \includegraphics[height=1.5in]{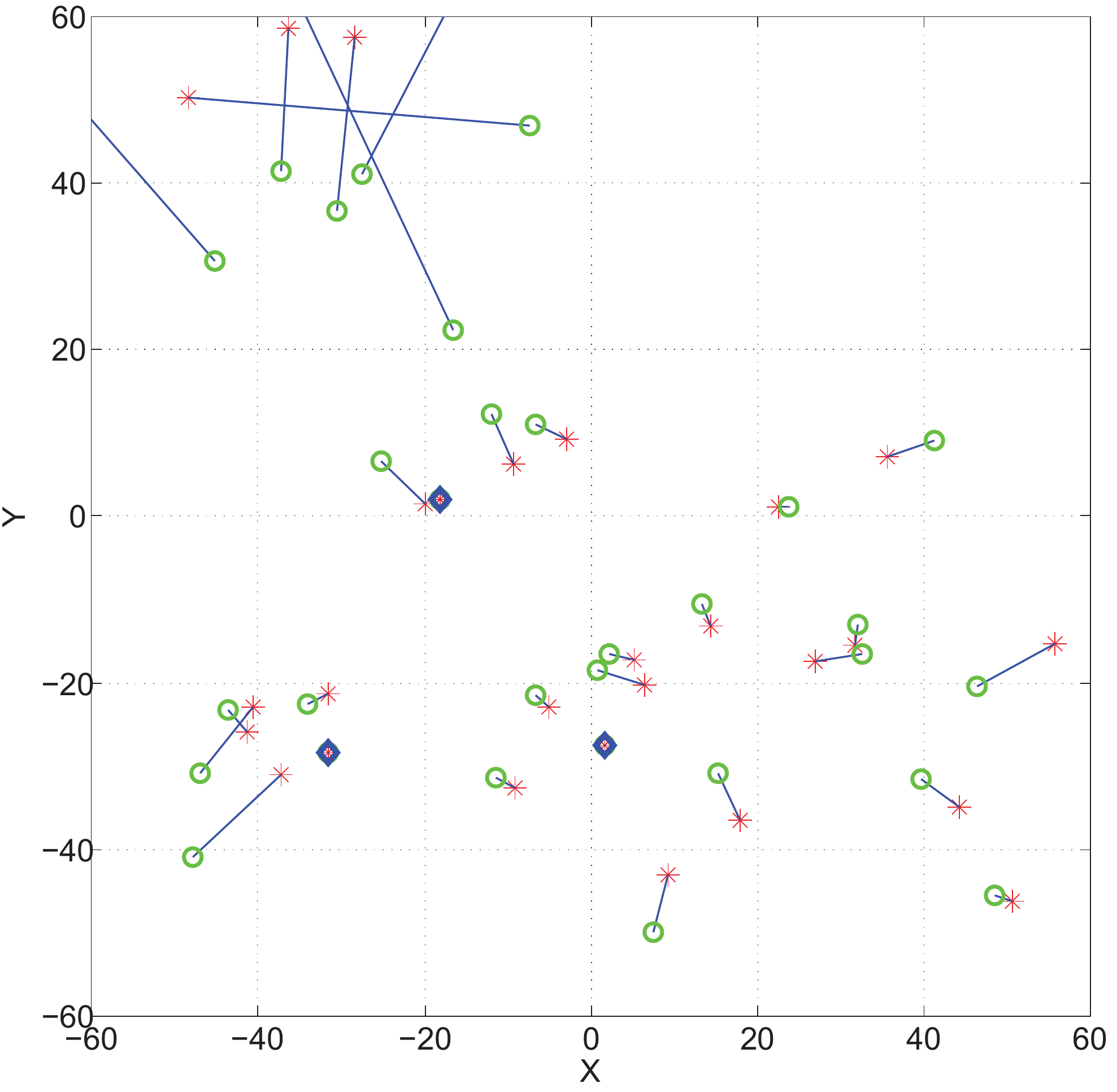} 
     \caption{Results by CALL,  one component} 
     \label{eva:3} 
   \end{minipage}
   \begin{minipage}[t]{0.247\linewidth} 
     \centering 
     \includegraphics[height=1.5in]{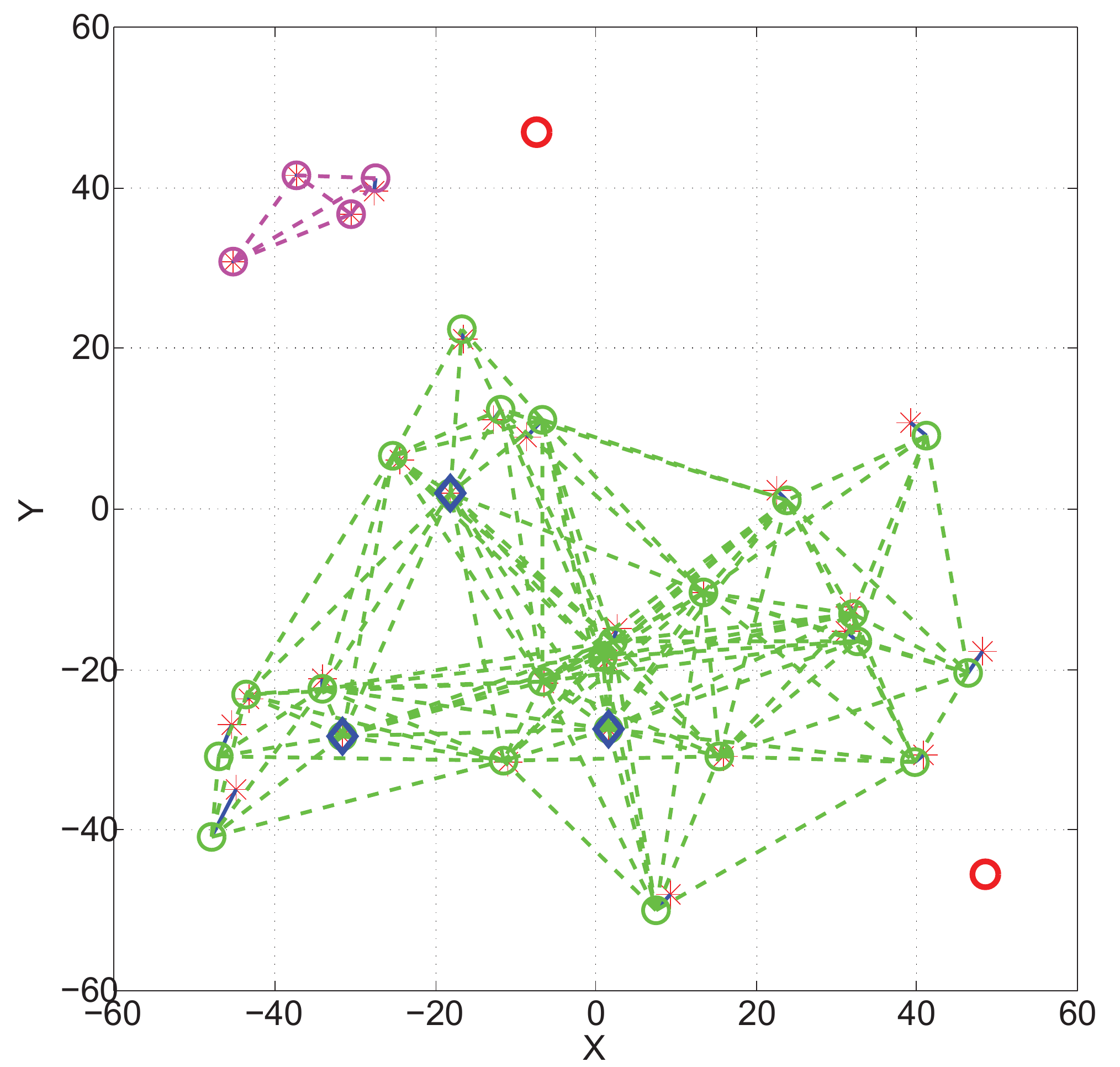} 
     \caption{Results by RCGR,  two components} 
     \label{eva:4} 
   \end{minipage} 
 \end{figure*}
\subsubsection{Robust Realization} Node locations in $C_1$ are realized sequentially. The realized node set $R_1$ starts from the anchors. The remaining node in $C_1$ with the smallest RSM condition number will be realized firstly, and be added into $R_1$.  The process repeats to realize the node with the smallest RSM condition number one per time and adds it to $R_1$. The process stops when the condition number of all the remaining nodes are larger than a threshold.  
\subsubsection{Component Refinement to Exclude Unreliable Nodes} The remaining unrealized nodes in $C_1$ will be excluded from $C_1$ to be returned to $G\setminus C$. $C_1$ equals only to $R_1$. Then MIRRORs are found in the realization of $R_1$ to list all possible location candidates of $\{L(C_1)\}$. 
\subsubsection{The Recursion}  In  $G\setminus C$, another  robust triangle will be selected as anchors, to generate $C_2$, and the process repeats as from 1) to 3). The overall process repeats until no triangles can be found in $G\!\setminus\!C$ .  
\subsection{The RCGR Algorithm}
The algorithm flow is given in Algorithm\ref{algorithm4}. Line 3 to 6 are for component generation; line 7 to 14 robustly realize the nodes in the component.  In line 15 to 16, the nodes that cannot be robustly realized will be excluded from the component. In line 17, candidate realizations are generated by finding MIRRORs.  
  \begin{algorithm}[htbp]
  \caption{RCGR Algorithm}
  \begin{algorithmic}[1]
  \small
  \REQUIRE  The graph $G=(V,E)$, Distance matrix $D$.  
  \ENSURE   Components $C=\{C_1, \cdots, C_K\}$; Realization candidates for components $\{\{L(C_1)\}, \cdots, \{L(C_K)\}\}$
  \STATE  Initialize $C=\emptyset$; $k=1$; not grouped part $U_k = G\!\setminus\!C$; 
  \WHILE{$U_k$ contains triangle}
         \STATE  Find a robust triangle as anchors to initialize $C_k$;  
       \WHILE  {(existing a node $u$ in $U_k\!\setminus\!C_k$ having two edges connected to nodes in $C_k$)}
       \STATE  $C_k = C_k \cup u$; 
        \ENDWHILE
       \STATE  In $C_k$, add anchors to the realized node set $R_k$;  
       \WHILE {($R_k$ having unrealized neighbors in  $C_k$)}
              \FOR{ $u$ be a neighbor of $R_k$ (denoted by $N(R_k)$)}
       		       \STATE  Generate location of $u$ by $R_k$ and evaluate $\kappa(u)$.
       		 \ENDFOR
             \IF{$\kappa(u_o)$ is the smallest in $N(R)$ and $\le T_\kappa$}
                  \STATE    $R_k=R_k\cup u_o$;  Save location $L(R_k)$. 
             \ENDIF     
                 \IF {$\min\{\kappa(u), u\in N(R)\} > T_\kappa$ }
                       \STATE  Break the while loop of realization and exclude unrealizable nodes from $C_k$.  $C_k\!=\!R_k$, $U_k\!=\!U_k\!\setminus\!R_k$.  
                       \STATE  $C= C\cup C_k$. Find MIRRORs in $L(R_k)$ to generate all realization candidates for $C_k$. 
                \ENDIF
        \ENDWHILE
      \STATE  k=k+1;
\ENDWHILE
  \end{algorithmic}
  \label{algorithm4}
  \end{algorithm}
\section{Simulation Results}
Simulations were conducted to evaluate the effectiveness of the proposed metrics and algorithms. Since the number of flipping ambiguity and the sensitivity of local deforming are mainly used in the RCGR algorithm, we focus on their impacts on the component generation and the contributions to the location accuracy improvement.

\subsection{Experiment Settings}
We evaluate our proposed algorithm by comparing with two state-of-the-art network localization algorithms. 1) Semidefinite programming (SDP)-based localization algorithm in \cite{biswas_semidefinite_2006},  which used SDP-based approach with regularization to solve the network localizaiton problem. Since it uses centralized optimization, its location accuracy can be thought as the upper bound that the component-based method can achieve. 2) Component-bAsed Localization aLgorithm (CALL) \cite{wang_component_2008}, which is the state-of-the-art in component-based localization. 

In simulation, we generate $n$ nodes uniformly random in an area of $100(m)*100(m)$. The ranging radius $r$ of each node varies to keep the distance matrix sparse. Based on the distance matrix, components generation and realization are carried out by RCGR and CALL  respectively and their localizaiton results are compared with that of SDP-based localization. 
\subsection{Robust Component Generation}
For a network of 30 nodes, with ranging radius $30m$, each edge $d_{i,j}$ is affected by multiplicative ranging noise,  i.e., $n_{i,j} = d_{i,j}(1+L_1/10)$, where $L_1 \sim N(0,1)$.  The network topology is shown in Fig.\ref{eva:1}.  The localization results given by SDP-based method are shown in Fig.\ref{eva:2}, which shows the centralized optimization gives rather accurate location results.  The localization results by CALL algorithm are shown in Fig.\ref{eva:3}. The nodes are formed into one component and the localization results of the nodes comparing to their true locations are shown. The true locations are represented by circles and the estimated locations are  by stars. We can see that the nodes far from the anchors, which are localized later in the sequential localizing process suffer large accumulated errors.  

The location results and the formed components by RCGR are shown in Fig.\ref{eva:4}.  RCGR adaptively organizes the graph into two components by online evaluating the component's local deforming sensitivity. The threshold for $\kappa(A)$ was set to 4 in experiments.  The locations of the nodes in the two components are accurately calculated via mergence by the several links between the two rigid components\cite{wang_etoc:_2010}. It can be seen the accuracy is nearly comparable to that of the centralized approach. By extensive simulations, we find RCGR can always adaptively organize components based on the evaluated robustness of the components. It  forms smaller but robust components while keeping relative good localizing accuracy in each formed component.
 \begin{figure} [t]
   \begin{minipage}[t]{0.48\linewidth} 
     \centering 
     \includegraphics[height=1.45in]{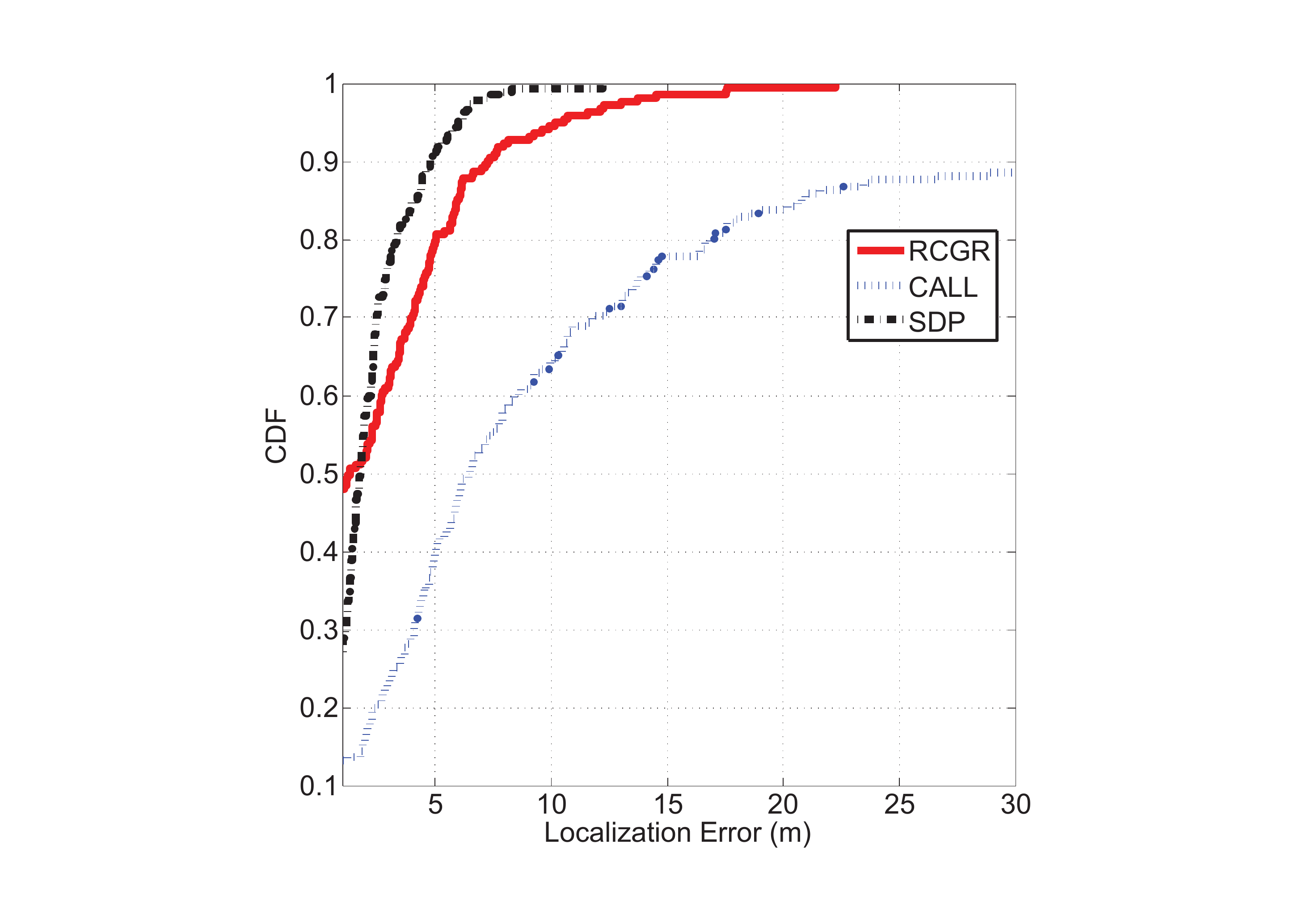} 
     \caption{The CDF of location errors} 
     \label{eva:5} 
   \end{minipage}
   \begin{minipage}[t]{0.48\linewidth} 
     \centering 
     \includegraphics[height=1.45in]{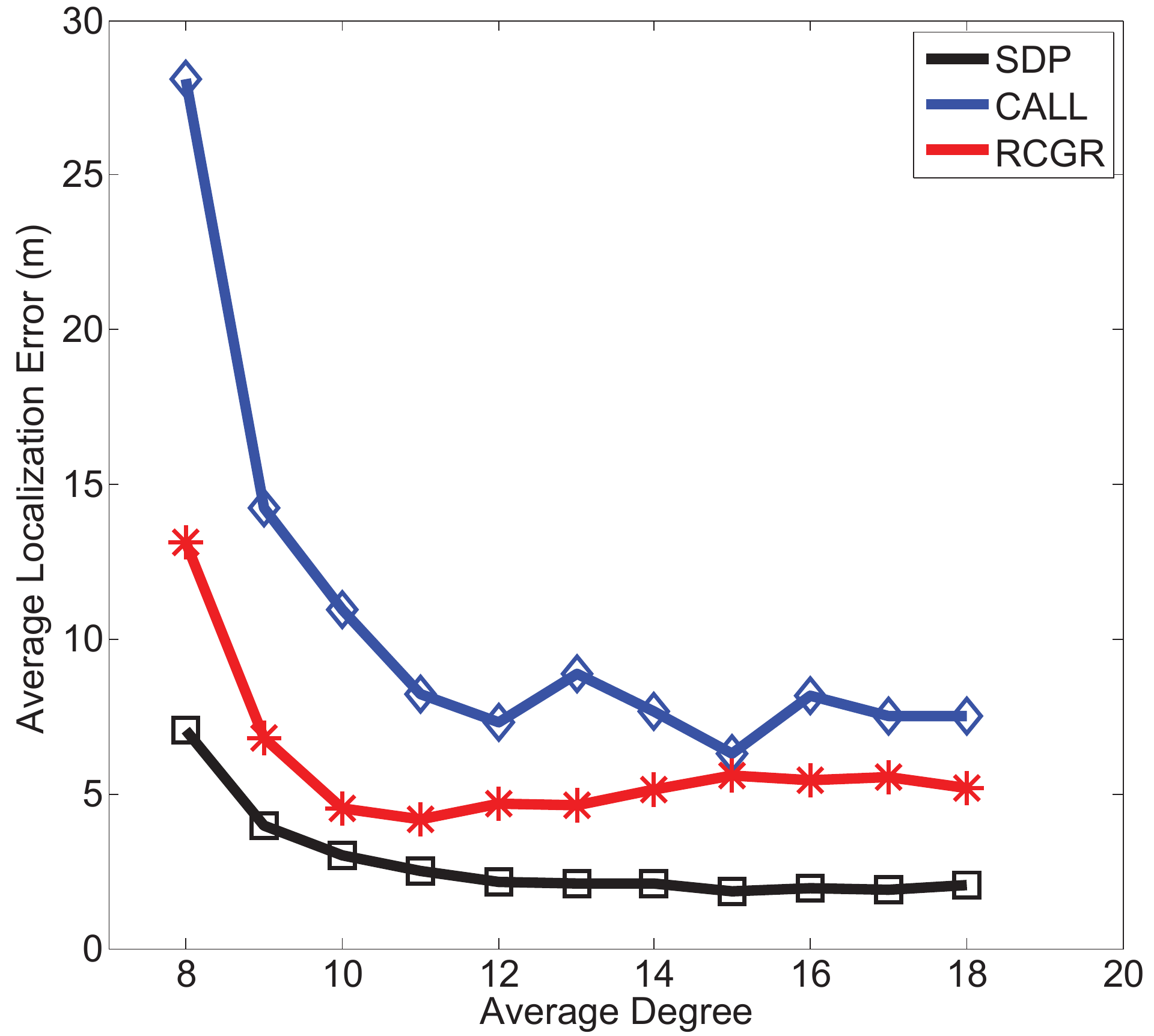} 
     \caption{Accuracy vs. average node degrees } 
     \label{eva:6} 
   \end{minipage}%
 \end{figure}
\subsection{Location Accuracy Improvement}
By varying node number from 30 to 50, we generate random networks for 30 times,  and run the  localization algorithms in each network. In each experiment, the location error of each node is evaluated by the distance from its true location to the estimated location. The cumulative error distribution of the location errors of nodes in the thirty experiments are summarized in Fig.\ref{eva:5}. We can see the location accuracy of RCGR is much better than CALL and only a little worse than SDP. This shows the effectiveness of  RCGR.  

We also  evaluate the localization performances for networks regarding to different average node degrees. When the average node degree varies from 8 to 18, the average localization errors of the three algorithms are compared in Fig.\ref{eva:6}. It can be seen that, as node degrees increase, the location accuracies of all the three algorithms become better. This is because in each step each node can be trilaterized by many localized nodes. RCGR shows better performance than CALL,  and SDP has the best accuracy.  But note that, as the average degree increases, the accuracy performance of RCGR  drops a little bit. This is because larger components are formed due to  denser edge connections, which increases the error accumulation. Experiments in larger scale networks show the similar results, which will be not detailed for space limitation.

\section{Conclusion}
This paper has presented two metrics to evaluate the robustness of a rigid graph under ranging noises: 1) a MIRROR detection-based method to evaluate the possible number of flip ambiguities in a rigid graph under ranging noises; 2) a method to calculate the condition number of Ranging Sensitivity Matrix to evaluate the graph's sensitivity of local deforming regarding to ranging noises. Based on these two metrics, a robust component generation and realization (RCGR) algorithm was proposed, which can form robust components adaptively based on the online component robustness evaluation and can improve the accuracy of component-based localization.  These results can be further extended into network localization in 3D space . The robustness metrics may also be exploited to other problems that an be modeled by graph realization under noise, such as robot group formation. It can also be investigated in Simultaneously Locating and Mapping (SLAM) for dealing with error accumulation while keeping low computing cost.  

\bibliographystyle{abbrv}
\bibliography{rigid}

\end{document}